\newcommand{\mycomment}[1]{}
\newtheorem{theorem}{Theorem}[section]
\newtheorem{lemma}[theorem]{Lemma}
\newtheorem{observation}[theorem]{Observation}
\title{Listing 4-Cycles}
\author{Amir Abboud\footnote{Weizmann Institute of Science. Supported by an Alon scholarship and a research grant from the Center for New Scientists at the Weizmann Institute of Science. Email: \text{amir.abboud@weizmann.ac.il}} \and Seri Khoury\footnote{UC Berkeley. Email: \text{seri\text{\_}khoury@berkeley.edu}} \and Oree Leibowitz\footnote{Weizmann Institute of Science. Email: \text{oree.leibowitz@weizmann.ac.il}} \and Ron Safier\footnote{Weizmann Institute of Science. Email: \text{ron.safier@weizmann.ac.il}}}
\date{}
\begin{document}

\maketitle

\begin{abstract}
   In this note we present an algorithm that lists all $4$-cycles in a graph in time $\tilde{O}(\min(n^2,m^{4/3})+t)$ where $t$ is their number. Notably, this separates $4$-cycle listing from triangle-listing, since the latter has a $(\min(n^3,m^{3/2})+t)^{1-o(1)}$ lower bound under the $3$-SUM Conjecture.
   Our upper bound is conditionally tight because (1) $O(n^2,m^{4/3})$ is the best known bound for detecting if the graph has any $4$-cycle, and (2) it matches a recent $(\min(n^3,m^{3/2})+t)^{1-o(1)}$ $3$-SUM lower bound for enumeration algorithms.
   The latter lower bound was proved very recently by Abboud, Bringmann, and Fischer [arXiv, 2022]  and  independently by Jin and Xu [arXiv, 2022].
   
   \paragraph{Independent work:} Jin and Xu [arXiv, 2022] also present an algorithm with the same time bound.
\end{abstract}

\section{Introduction}

Finding small patterns in large graphs is a classical task.
Perhaps the two smallest patterns that make such problems non-trivial are the triangle and the $4$-cycle.
The best known algorithms for \emph{detecting} if a graph has at least one pattern take $O(\min(n^{\omega},m^{2\omega/(\omega+1)}))$ for triangle \cite{alon1997finding}, where $\omega<2.37188$ is the fast matrix multiplication exponent \cite{AlmanW21,https://doi.org/10.48550/arxiv.2210.10173}, and $O(\min(n^2,m^{4/3}))$ for $4$-cycle \cite{alon1997finding}.\footnote{Throughout, we assume that graphs are unweighted, undirected, and have $n$ nodes and $m$ edges.} Note that if $\omega=2$ then the two bounds are the same.

In the \emph{listing} formulation we are asked to return all occurrences of the pattern in the graph. 
Simple exhaustive search lets us list all triangles in $O(\min(n^3,m^{1.5})+t)$ time, where $t$ is their number.
A more clever algorithm by Björklund, Pagh, Vassilevska William, and Zwick \cite{bjorklund2014listing} has an upper bound of 
$\Tilde{O}(n^\omega+n^\frac{3(\omega-1)}{5-\omega} t^\frac{2(3-\omega)}{5-\omega})$ and $\Tilde{O}(m^\frac{2\omega}{\omega+1}+m^\frac{3(\omega-1)}{\omega+1} t^\frac{3-\omega}{\omega+1})$. Assuming $\omega=2$ the running times simplify to $\Tilde{O}(n^2+nt^{2/3})$ and $\Tilde{O}(m^{4/3}+mt^{1/3})$ which is essentially tight under  the $3$-SUM Conjecture due to a reduction of Kopelowitz, Pettie, and Porat \cite{Kopelowitz2016HigherLB} that optimizes a construction by Pătraşcu \cite{patrascu2010towards}, and by the APSP Conjecture by a reduction of Vassilevska Williams and Xu \cite{Williams2020MonochromaticTT}.

We present the first algorithm improving over exhaustive search for $4$-cycle listing. Notably, it is faster than the lower bounds for triangle listing and therefore separates the two problems (under the $3$-SUM Conjecture).

\begin{theorem}
$4$-cycle listing can be solved in $O(\min(n^2+t,(m^{4/3}+t)\cdot \log^2 n))$ time.
\end{theorem}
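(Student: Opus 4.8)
The plan is to establish the two halves of the $\min$ by separate algorithms: the $O(n^2+t)$ bound comes from a clean counting argument, while the $O((m^{4/3}+t)\log^2 n)$ bound requires a degree‑based case analysis whose last case is the real difficulty.

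\textbf{The $O(n^2+t)$ bound.} I would run the following pivot procedure. For each vertex $v$, enumerate every length‑$2$ path $v-u-w$ (iterate $u\in N(v)$, then $w\in N(u)$), maintaining for the current pivot a list $L_w$ of all previously‑seen common neighbors of $v$ and $w$; upon reaching $v-u-w$, output the $4$‑cycle $v-u'-w-u-v$ for every $u'\in L_w$ and then append $u$ to $L_w$, clearing only the touched lists before moving to the next pivot. Each $4$‑cycle with diagonal $\{v,w\}$ is output exactly once while $v$ is the pivot and $w$ the opposite vertex, so correctness is immediate and the output work is $O(t)$. The enumeration costs $\sum_v\sum_{u\in N(v)}\deg(u)=\sum_u\deg(u)^2$, and the crucial point is $\sum_u\deg(u)^2=O(n^2+t)$: writing $P_2$ for the number of length‑$2$ paths, $\sum_u\deg(u)^2=2m+2P_2$; a pair with at most one common neighbor accounts for at most one such path, and a pair $\{x,y\}$ with $|N(x)\cap N(y)|=c\ge 2$ accounts for $c\le 2\binom c2$ of them, so $P_2\le\binom n2+2\sum_{\{x,y\}}\binom{|N(x)\cap N(y)|}{2}=\binom n2+4t$.

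\textbf{Setup for the $O((m^{4/3}+t)\log^2 n)$ bound.} Put $\Delta=m^{1/3}$ and call a vertex \emph{heavy} if its degree exceeds $\Delta$, so there are at most $2m^{2/3}$ heavy vertices. Since each $4$‑cycle has two diagonals, each a pair with $\ge 2$ common neighbors, it suffices to enumerate, for every such ``rich'' pair, the length‑$2$ paths joining its endpoints. Rich pairs with both endpoints heavy are handled by running the pivot procedure on the induced subgraph $G[H]$; the $n^2$‑bound applied to $G[H]$ gives cost $O(|H|^2+t)=O(m^{4/3}+t)$. For a rich pair with a light endpoint $u$, all common neighbors lie in $N(u)$ of size $\le\Delta$, which I exploit by running a variant of the pivot procedure enumerating only length‑$2$ paths whose \emph{middle} vertex is light: its cost is $\sum_{u\text{ light}}\deg(u)^2\le 2m\Delta=O(m^{4/3})$, and a short case check shows it outputs exactly the $4$‑cycles possessing a diagonal with two light endpoints. (Here the lists $L_w$ must be kept in a hash table rather than a length‑$n$ array, since $\Theta(n)$ work per pivot is no longer affordable — one source of a logarithmic factor.)

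\textbf{Remaining cases and the main obstacle.} After these two passes the only $4$‑cycles not yet output are those that are neither all‑heavy nor have an all‑light diagonal, i.e.\ exactly three heavy vertices, or exactly two heavy vertices that are adjacent. The three‑heavy case is routine: precompute, for every pair of heavy vertices, its list of heavy common neighbors and its list of light common neighbors — total sizes $O(m^{4/3}+t)$ (length‑$2$ paths in $G[H]$) and $O(m\Delta)=O(m^{4/3})$ (length‑$2$ paths with light midpoint and heavy endpoints) — and emit $x-p-y-q-x$ for each heavy pair $\{x,y\}$, heavy common neighbor $p$, light common neighbor $q$. The genuine difficulty is the two‑adjacent‑heavy case: a $4$‑cycle $a-b-c-d-a$ with $a,b$ heavy and $c,d$ light, i.e.\ a light edge $(c,d)$ opposite a heavy edge $(a,b)$. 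The naive route — for each light edge $(c,d)$ scan $N(c)\cap H$ and $N(d)\cap H$ (each of size $\le\Delta$) for a connecting heavy edge — costs $\Theta(\Delta^2)$ per light edge, hence $\Theta(m^{5/3})$ in the worst case, and one cannot afford to touch a heavy vertex's full neighborhood. What is needed is an output‑sensitive routine listing all $G[H]$‑edges between two given size‑$\le\Delta$ vertex sets in time proportional to $\Delta$ plus the number of such edges; then non‑productive work is paid for by $\sum_{\text{light edges }(c,d)}\big(|N(c)\cap H|+|N(d)\cap H|\big)\le\Delta\sum_{u\text{ light}}|N(u)\cap H|\le m^{4/3}$ and productive work by $t$. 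Designing this primitive — presumably via sorted or balanced structures for the heavy neighborhoods and for $G[H]$, intersected by finger search — is the step I expect to be the crux, and (together with the hashing above) the source of the $\log^2 n$ factor.
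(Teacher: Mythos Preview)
Your $O(n^2+t)$ argument is correct and essentially matches the paper, and your case analysis for the $m^{4/3}$ bound (all-heavy; a light--light diagonal; three heavy and one light; two adjacent heavy and two adjacent light) is complete and lines up with the paper's. The gap is exactly where you flag it, and the direction you propose for closing it does not work.

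The primitive you ask for --- after preprocessing $G[H]$, given two size-$\le\Delta$ vertex sets $S,T$, list all $G[H]$-edges between them in time $\tilde{O}(|S|+|T|+\text{output})$ --- is not delivered by sorted neighborhoods plus finger search. For a single $a\in S$ the list $N_{G[H]}(a)$ may have size up to $|H|\approx m^{2/3}$, and intersecting it with $T$ cannot cost $\tilde{O}(1)$ when the intersection is empty; summed over $a\in S$ you are back to $\tilde{O}(|S|\cdot|T|)=\tilde{O}(\Delta^2)$ in the worst case. More broadly, even the decision version (with $|H|^2$ preprocessing, decide whether two size-$\sqrt{|H|}$ sets span an edge in $\tilde{O}(\sqrt{|H|})$ time) is an ``edge/independent-set query'' problem for which no such data structure is known; obtaining one would be a result of independent interest, not a routine step.

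The paper resolves the $L$--$H$--$H$--$L$ case by a combinatorial detour rather than a data structure. Orient every edge from the lower-degree endpoint to the higher-degree one. In any such cycle, one of its two $L$--$H$--$H$ 2-paths has its middle $H$ of degree at most the endpoint $H$, i.e.\ it is a \emph{directed} $L\to H\to H$ path. Hence it suffices to list all $L\to H\to H$ paths and pair each with the already-listed $H$--$L$--$L$ path sharing its endpoints. The heart of the proof is then a counting bound: the number $P$ of $L\to H\to H$ paths satisfies $P=O\big((m^{4/3}+t)\log^2 n\big)$, so one can afford to enumerate them by brute force in $O(m+P)$ time. That bound is proved by (i) bucketing the middle $H$-vertex by its in-degree from $L$ and its out-degree into $H$ (this is the sole source of the $\log^2 n$), and (ii) applying, inside $G[H]$ where there are at most $2m^{2/3}$ vertices, the classical fact that a graph on $N$ vertices with average degree $d$ contains $\Omega(d^4-N^2)$ four-cycles; a short case analysis then forces $P'\le O(m^{4/3})$ or else $t=\Omega(P')$.

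In short: you correctly identified the crux, but the resolution is combinatorial --- bound the number of the relevant oriented $2$-paths and list them all --- not data-structural.
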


Any improvement on our upper bound would break the longstanding upper bound for $4$-cycle \emph{detection}.
The first super-linear lower bound under popular conjectures for $4$-cycle listing was proved recently by Abboud, Bringmann, Khoury, and Zamir \cite{ABKZ22} using the ``short cycle removal'' technique.
Recent work by Abboud, Bringmann, and Fischer \cite{ABFarxiv} and independently by Jin and Xu \cite{JXarxiv} optimized this technique and proved a $(\min(n^2,m^{4/3})+t)^{1-o(1)}$ lower bound under the $3$-SUM Conjecture. Our upper bound complements these lower bounds.\footnote{These lower bounds only hold for the closely related \emph{enumeration} problem; our upper bounds also apply in that setting by standard techniques.}

\paragraph{Independent work:} Jin and Xu \cite{JXarxiv} independently obtained the same result by a similar technique.

%   In this note we present an algorithm that lists all $4$-cycles in a graph in time $\tilde{O}(\min(n^2,m^{4/3})+t)$ where $t$ is their number. Notably, this conditionally separates $4$-cycle listing from triangle-listing, since the latter has a $(\min(n^3,m^{3/2})+t)^{1-o(1)}$ lower bound under the $3$-SUM Conjecture.
%   Our upper bound is tight up to $n^{o(1)}$ factors under the $3$-SUM Conjecture due to a recent lower bound of Abboud, Bringmann, and Fischer \cite{ABFarxiv} via the ``short cycle removal'' technique of Abboud, Bringmann, Khoury, and Zamir (STOC'22) \cite{ABKZ22}.
%   The upper and lower bounds were independently discovered by Jin and Xu \cite{JXarxiv}.

\section{Upper bounds for 4-Cycle Listing}
In Section~\ref{sec:warmup}, we start with a simple extension of the folklore $O(n^2)$-time algorithm for $4$-cycle detection~\cite{RICHARDS1985249} to an $O(n^2+t)$-time algorithm for $4$-cycles listing, where $t$ is their number.

Then, in Section~\ref{sec:4/3Alg}, we present our main result in this note, which is an $\tilde{O}(m^{4/3}+t)$-time algorithm for $4$-cycles listing.

\subsection{Warm-up: An $O(n^2+t)$ algorithm}\label{sec:warmup}

% \begin{lemma}\label{lem:no-4-cycles-no-2-paths}
% Given a graph $G$, let $p$ denote the number of 2-paths and $t$ denote the number of 4-cycles.
% It holds that $2t + O(n^2) \ge p$.
% \end{lemma}
% \begin{proof}
% For every pair of nodes $u,v\in V$, denote by $p_{u,v}$ the number of 2-paths between $u$ and $v$.
% Notice that every pair of 2-paths between $u$ and $v$ corresponds to a 4-cycle. Therefore, we can count the number of $4$-cycles by going over all pairs of nodes $u,v$ and for each of them, we count the number of $4$-cycles that are coming from 2-paths between $u$ and $v$. This way, each $4$-cycle is counted exactly twice. Hence, we have that:
%     \begin{equation*}
%         2t = \sum_{\substack{u,v\in V\\
%                             p_{u,v}>1}}
%                             \binom{p_{u,v}}{2}
%             \ge \sum_{\substack{u,v\in V\\
%                             p_{u,v}>1}}
%                             p_{u,v}
%             = p - \sum_{\substack{u,v\in V\\
%                             p_{u,v}=1}}
%                             p_{u,v}
%             \ge p - \binom{n}{2}
%     \end{equation*}
% \end{proof}
\begin{observation}\label{thm:n-listing}
Given a graph $G$, there is an $O(n^2+t)$-time algorithm that lists all the $4$-cycles, where $t$ is their number.
\end{observation}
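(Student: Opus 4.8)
\emph{Proof proposal.}
The plan is to adapt the folklore $O(n^2)$-time $4$-cycle \emph{detection} algorithm~\cite{RICHARDS1985249}, which rests on the observation that a $4$-cycle is exactly a pair of vertices $\{a,c\}$ together with two \emph{distinct} common neighbors $\{b,d\}$. I would iterate over every vertex $a$, treated as an ``apex'', maintaining a counter array $\mathrm{cnt}[\cdot]$ and a list array $L[\cdot]$, both reset between apexes in $O(n)$ time each. For a fixed $a$, loop over every $b\in N(a)$ with $b>a$ and then over every $c\in N(b)$ with $c>a$; such a pair $(b,c)$ certifies that $b$ is a common neighbor of $a$ and $c$. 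When we reach $(b,c)$, if $\mathrm{cnt}[c]\ge 1$ then for each $b'$ already stored in $L[c]$ we output the $4$-cycle $a, b', c, b$ (in this cyclic order, with edges $ab'$, $b'c$, $cb$, $ba$); in every case we then append $b$ to $L[c]$ and increment $\mathrm{cnt}[c]$. The constraints $b>a$ and $c>a$ ensure that the four vertices are distinct and, more importantly, that a given $4$-cycle is produced only while its minimum vertex is the apex, and then only from the unique pair $(a,c)$ of its vertices at distance two in the cycle — so each $4$-cycle is output exactly once.

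Correctness is then essentially immediate: for a fixed apex $a$, the vertices $b$ that get associated with a given $c$ are precisely the common neighbors of $a$ and $c$ that exceed $a$, and emitting every ``ordered'' pair among them realizes exactly the $\binom{k}{2}$ four-cycles on the diagonal $\{a,c\}$ when there are $k$ such common neighbors. Checking the no-double-output claim amounts to pinning down, for a fixed $4$-cycle, which apex and which pair $(a,c)$ can generate it, which the ordering constraints make deterministic.

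The crux is the running time. Naively the double loop costs $\sum_{b\in N(a)}\deg(b)$ per apex, i.e.\ $\sum_b \deg(b)^2$ in total, which is cubic in the worst case, so a charging argument is needed. Call the examination of a pair $(b,c)$ (for apex $a$) \emph{fresh} if $\mathrm{cnt}[c]=0$ at that moment and \emph{productive} otherwise. Each apex has at most $n$ fresh examinations (at most one per vertex $c$), hence $O(n^2)$ over the whole run. Every productive examination outputs at least one $4$-cycle, distinct productive examinations output disjoint sets of $4$-cycles (two such examinations differ either in $c$ or in the just-appended $b$, and in both cases the emitted cycles differ), and every output cycle is counted exactly once across all apexes; hence there are $O(t)$ productive examinations and $O(t)$ total emission work. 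Adding $O(n)$ per apex for (re)initialization gives the claimed $O(n^2+t)$ bound. I expect the only delicate point to be making this ``disjointness / each cycle once'' bookkeeping fully airtight.
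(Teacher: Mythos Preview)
Your proposal is correct and takes essentially the same approach as the paper: both list $2$-paths grouped by their endpoints, combine pairs of $2$-paths sharing the same endpoints into $4$-cycles, and rely on the key bound that the number of $2$-paths is $O(n^2+t)$ (which you prove via the fresh/productive charging, while the paper simply asserts it). Your version is more explicit about the implementation and the no-double-counting bookkeeping, but the underlying idea is identical.
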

\begin{proof}[Proof Sketch]
We can list all $2$-paths, by going over all pairs of nodes $u,v$ and list the $2$-paths between $u$ and $v$. After listing all $2$-paths, we go over all pairs of nodes $u,v$ again and list the $4$-cycles that $u$ and $v$ participate in, by going over all pairs of $2$-paths between $u$ and $v$. 

The time complexity is linear in the number of $2$-paths and the number of $4$-cycles $t$. By observing that the number of $2$-paths is $O(n^2+t)$, the claim follows. 
%Create an $n\times n$ array $P$. Iterate over all the nodes $x$. For all pairs of neighbors $u,v\in N(x)$, store $x$ in $P[u,v]$. If $P[u,v]$ already contains nodes $y_1,\dots,y_k$ we output the 4-cycles $(x,u,y_i,v)$ for all $i$. The time is $n^2+t$ because each time we access an entry in $P$ (except for the first time) we output at least one 4-cycle.
\end{proof}
% \begin{proof}
%  The algorithm consists of two stages. The first stage lists all the 2-paths in the graph, and the second stage uses the listed 2-paths to list all the 4-cycles. For each pair of nodes $u,v$, we maintain a list of nodes $P_{u,v}$, which is the list of common neighbors of $u$ and $v$.
% \begin{description}
%     \item[Stage 1:] We go over all nodes $u$, and for each pair of neighbors $v,w$, we add $u$ to $P_{v,w}$. 
%     \item[Stage 2:] We go over all pairs of nodes $u,v$, and for each such pair, we go over all pairs of nodes in $P_{u,v}$, and list the corresponding $4$-cycles.
% \end{description}
% The running time of the algorithm is $O(n^2+t)$. This is because the first stage takes time that is linear in the number of $2$-paths, and the second step takes time that is linear in the number of 2-paths and $4$-cycles. Since the number of 2-paths is $O(n^2+t)$ by Lemma~\ref{lem:no-4-cycles-no-2-paths}, the claim follows.
% \end{proof}

\subsection{An $\tilde{O}(m^{4/3}+t)$-time algorithm}\label{sec:4/3Alg}

In order to improve the $O(n^2+t)$-time algorithm, we can't afford listing all $2$-paths. For instance, in a star graph, there are $O(n^2)$ $2$-paths, but no $4$-cycles. Hence, intuitively speaking, we need to narrow our attention to a certain type of $2$-paths that are useful for listing $4$-cycles efficiently. Indeed, to overcome the star example, it suffices to note that there is no point in listing $2$-paths with endpoints of degree one (leaves), as these $2$-paths can't be extended to $4$-cycles. 

To extend this intuition, perhaps one could try to split the nodes into low-degree and high-degree groups, denoted by $L$ and $H$, respectively, and consider different types of $2$-paths with respect to this partitioning. The advantage of such partitioning is that we can narrow our attention to specific types of $2$-paths that are more challenging for listing. For instance, one can immediately spot two types of $2$-paths that are less challenging for listing. The first is the type of $2$-paths with a low-degree node at the center, and the second is the type of $2$-paths that use only high-degree nodes. For listing the first type, we just need to go over all the low-degree nodes, and list their pairs of neighbors, and there are only $O(\Delta^2)$ such pairs per low-degree node, where $\Delta$ is the degree threshold. For the second type, we can bound the number of high-degree nodes by $2m/\Delta$, which helps in bounding the number of $2$-paths that use only high-degree nodes. Indeed, by picking $\Delta=m^{1/3}$, listing these two types of $2$-paths takes $O(m^{4/3}+t)$ time (as shown in lemmas \ref{lem:all-h-2-paths} and \ref{lem:l-in-the-middle-2-paths}). Furthermore, listing these two types of $2$-paths suffices for listing all types of $4$-cycles, except the $4$-cycles that use two overlapping $2$-paths of the form $LHH$ ($2$-paths with a high degree node at the center, one low degree endpoint, and one high degree endpoint). That is, these $4$-cycles are of the form $LHHL$. To list these $4$-cycles, we need to find a way to list $2$-paths of the form $LHH$.

Unfortunately, we can't afford listing all $2$-paths of the form $LHH$. For instance, take a graph where there is a node $u$ that is connected to $n$ leaves (low degree nodes), and to $n^{2/3-\epsilon}$ high-degree nodes, where each of these high-degree nodes is connected to $n^{1/3+\epsilon}$ leaves. In this example, we have $m=O(n)$ edges, $m^{5/3-\epsilon}\gg m^{4/3}$ $2$-paths of the form $LHH$ (the ones that go through $u$), but no $4$-cycles. 

To overcome such examples, recall that the only remaining type of $4$-cycles that we need to list are the ones of the form $LHHL$. Since we know how to list $2$-paths of the form $HLL$ ($2$-paths with a low-degree node at the center) efficiently, it suffices to list \emph{only one of the two} $LHH$ $2$-paths that such a $4$-cycle consists of. Hence, for the $4$-cycles of the form $LHHL$, one could wonder: is there a property that one of the two overlapping $LHH$ paths (that the $4$-cycle consists of) must have, that would make it easier to list such $4$-cycles? 

Indeed, given a $4$-cycle of the form $LHHL$, for the two middle high-degree nodes, one of them must have degree greater or equal to the other. Therefore, it suffices to list $2$-paths of the form $LHH$, where the degree of the middle node is at most the degree of the third node (the high-degree endpoint). We refer to such $2$-paths as $L\rightarrow H\rightarrow H$ (the orientation from a node $u$ to a node $v$ here means that $u$'s degree is at most $v$'s). The question that remains is: can we afford listing all $2$-paths of the form $L\rightarrow H\rightarrow H$? Interestingly, in this note we answer this question affirmatively. Roughly speaking, we show that there can't be that many $L\rightarrow H\rightarrow H$ paths compared to $4$-cycles. Hence, we use a charging argument that allows us to list all such $2$-paths.

\paragraph{A road-map for the technical parts.} First, in Section~\ref{sec:LHH}, we prove a helpful theorem that shows that there can't be too many $2$-paths of the form $L\rightarrow H\rightarrow H$ compared to $4$-cycles. We refer to this theorem as the $L\rightarrow H\rightarrow H$ theorem. Then, in Section~\ref{sec:alg}, we put everything together and prove our main result - an $\tilde{O}(m^{4/3} + t)$-time algorithm for $4$-cycle listing.

\subsection{The $L\rightarrow H\rightarrow H$ Theorem}\label{sec:LHH}
In this section we prove the following theorem that connects the number of $4$-cycles in a graph to the number of $2$-paths of a certain type. The degree of a node $v$ is denoted by $\deg(v)$.
\begin{theorem}\label{thm:conj}
Given an undirected graph $G=(V,E)$ with $m$ edges, let $H$ be the set of nodes with degree larger than $m^{1/3}$, and $L=V\setminus H$. Orient all the edges $\{u,v\}$ from $u$ to $v$ if $\deg(u)\leq \deg(v)$ (break ties arbitrarily). Let $P$ be the number of directed $2$-paths of the form $L\rightarrow H\rightarrow H$. It holds that if $P>100m^{4/3}\log^2n$, then there are at least $P/(100\log^2n)$ $4$-cycles.
\end{theorem}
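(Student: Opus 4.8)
The plan is to exploit two facts. First, a directed $2$-path $a\to b\to c$ of the form $L\to H\to H$ is essentially controlled by its ``upward'' edge $b\to c$ inside $H$: since $a\in L$ the edge $a\to b$ is automatic, and $c$ ranges over the out-neighbours of $b$, all of which have degree $\ge\deg(b)>m^{1/3}$ and hence lie in $H$. Second, $H$ is small, $|H|\le 2m^{2/3}$, so if there are many such $2$-paths then the upward edges leaving the middle vertices must collide heavily, and those collisions yield many $4$-cycles with all four vertices in $H$.

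\emph{Set-up and bucketing.} Assume the graph has no isolated vertices, so $n\le 2m$, and recall that at most $2m/k$ vertices have degree $\ge k$. Write $\ell_b=|N(b)\cap L|$ and let $d^+(b)$ denote the out-degree of $b$; as noted above, for $b\in H$ every out-neighbour of $b$ lies in $H$, so $d^+(b)\le\min(\deg(b),\,2m/\deg(b))\le\sqrt{2m}$, and the quantity in the theorem is $P=\sum_{b\in H}\ell_b\,d^+(b)$. Bucket $H$ dyadically by the pair $(\deg(b),d^+(b))$: for powers of two $D\ge m^{1/3}$ and $K\ge 1$ set $H_{D,K}=\{b\in H:\deg(b)\in[D,2D),\ d^+(b)\in[K,2K)\}$ and $P_{D,K}=\sum_{b\in H_{D,K}}\ell_b\,d^+(b)$; there are $O(\log^2 n)$ non-empty buckets. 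The buckets with $K<3m^{1/3}$ contribute only $O(m^{4/3})$ to $P$ in total, since $\sum_D\sum_{K<3m^{1/3}}P_{D,K}\le\sum_{K<3m^{1/3}}2K\sum_{b\in H}\ell_b\le 2m\cdot O(m^{1/3})$ by a geometric sum in $K$ and the bound $\sum_{b\in H}\ell_b\le m$ (every edge counted there has an endpoint in $L$). Hence, under the hypothesis $P>100m^{4/3}\log^2 n$, there is a bucket with $K\ge 3m^{1/3}$ and $P_{D,K}=\Omega(P/\log^2 n)$; in particular $P_{D,K}$ is far larger than $m^{4/3}$.

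\emph{Producing the $4$-cycles.} Fix such a bucket and set $R=|H_{D,K}|$; since every summand of $P_{D,K}$ is $<4DK$ we get $R>P_{D,K}/(4DK)$. Every edge $b\to c$ with $b\in H_{D,K}$ ends in the set of $\le 2m/D$ vertices of degree $\ge D$, and $b$ has at least $K$ such edges, so a convexity (Jensen) argument over these $\le 2m/D$ endpoints gives $\Omega\!\big((RK)^2/(m/D)\big)=\Omega(R^2K^2D/m)$ ``cherries'' $b$--$c$--$b'$ with $b\neq b'$ in $H_{D,K}$ and $c$ a common out-neighbour. A second convexity argument, now over the $\binom{R}{2}$ pairs $\{b,b'\}$, converts these cherries into $\Omega\!\big((\#\text{cherries})^2/R^2\big)=\Omega(R^2K^4D^2/m^2)$ distinct $4$-cycles $b$--$c$--$b'$--$c'$; the loss here is only a constant, because each $4$-cycle has two diagonals and is therefore counted at most twice. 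Substituting $R>P_{D,K}/(4DK)$ yields $\Omega(P_{D,K}^2K^2/m^2)$ such $4$-cycles; since $K\ge 3m^{1/3}$ this is $\Omega(P_{D,K}^2/m^{4/3})$, and since $P_{D,K}\gg m^{4/3}$ it is $\Omega(P_{D,K})=\Omega(P/\log^2 n)$, as required. Both convexity steps need only $RKD=\Omega(m)$ and $K^2D=\Omega(m)$, which follow from $P_{D,K}\gg m^{4/3}$, $\deg(b)\ge D>m^{1/3}$ and $K\ge 3m^{1/3}$; one checks the explicit constants fit comfortably under the stated bound, with room to absorb the $\log^2 n$ losses.

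\emph{Main obstacle.} The only genuinely non-obvious step is the strategy above: one must \emph{discard} the low-degree endpoints when hunting for $4$-cycles — they serve only to bound $P$ — and instead locate the $4$-cycles among triples in $H_{D,K}\times H$ via the pigeonhole forced by $|H|\le 2m^{2/3}$, together with the right case split on $d^+(b)$ (small $d^+(b)$ makes $P=\sum_b\ell_b d^+(b)$ directly $O(m^{4/3})$ through $\sum_b\ell_b\le m$, while large $d^+(b)$ forces the collisions). Everything else — the two Jensen applications and tracking the constants through $O(\log^2 n)$ dyadic buckets — is routine.
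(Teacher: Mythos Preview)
Your argument is correct, and it is a genuine (if closely related) alternative to the paper's proof.

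\textbf{How the paper does it.} The paper first takes a random bipartition $H=A\sqcup B$ and then buckets the vertices of $A$ dyadically by the pair $(\deg_L,\deg_B)$; this is the content of the ``regular view'' lemma and is where the $\log^2 n$ loss appears. On the resulting near-regular piece it applies the standalone spectral bound ``any $n'$-vertex graph of average degree $d_0$ has $\Omega(d_0^4-(n')^2)$ $4$-cycles'' to $G[A\cup B]$, and finishes with a two-case analysis depending on whether $|A|>|B|$ or $|A|\le|B|$.

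\textbf{How yours differs.} You bucket $H$ directly by $(\deg(b),d^+(b))$, dispose of the small-$d^+$ buckets via $\sum_b\ell_b\le m$, and on the surviving bucket replace the spectral lemma by two explicit Jensen steps: edges from the bucket into $\{v:\deg(v)\ge D\}$ pigeonhole into $\Omega(R^2K^2D/m)$ cherries, and cherries pigeonhole over the $\binom{R}{2}$ base pairs into $\Omega(R^2K^4D^2/m^2)$ $4$-cycles. This avoids the probabilistic partition, the spectral lemma, and the case split; conversely, the paper's route packages the convexity once and for all into the clean $d_0^4-(n')^2$ statement, which is reusable. The two arguments are morally the same convexity phenomenon (indeed the paper notes the $d_0^4$ bound can also be proved by a convexity argument), and both land on the same $\Omega(P/\log^2 n)$ with the same $\log^2 n$ loss coming from the dyadic bucketing.

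One small remark on constants: your second Jensen step needs $K^2D=\Omega(m)$, and with the threshold $K\ge 3m^{1/3}$ together with $D>m^{1/3}/2$ this is borderline; bumping the small-$K$ cutoff to, say, $10m^{1/3}$ (still only $O(m^{4/3})$ total from those buckets) gives comfortable room, consistent with your closing sentence.
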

In order to proof theorem \ref{thm:conj}, we use two helper lemmas. In Lemma~\ref{lem:AvgDegreeCycles}, we show that the number of $4$-cycles is $\Omega(d^4-n^2)$, where $d$ is the average degree.\footnote{This statement follows from known techniques; Jin and Xu use \cite{BGSW19} who proved it using a convexity argument. We give a self-contained short proof with a spectral argument that is morally similar and might be of independent interest.} In Lemma~\ref{lem:regularView}, we provide a view of the graph that has some nice properties. In particular, this view is a partitioning of the graph that has a useful regularity property, while the number of $L\rightarrow H\rightarrow H$ $2$-paths is preserved. The proof of Theorem~\ref{thm:conj} in provided after the proof of Lemma~\ref{lem:regularView}.
\begin{lemma}\label{lem:AvgDegreeCycles}
Any graph with $n$ nodes and average degree $d$ has $\Omega(d^4-n^2)$ $4$-cycles.
\begin{proof}
Let $G=(V,E)$ be a graph with $n$ nodes and average degree $d$. Let $A$ be the adjacency matrix of $G$. Denote by $\lambda_1\ge \lambda_2 \ge \dots \ge \lambda_n$ the $n$ eigenvalues of $A$.
The top eigenvalue $\lambda_1$ of $A$ is at least $d$. This is because: 

$$\lambda_1 = \max_{v^T v = 1} v^T A v$$

Now, consider $u = (1/\sqrt{n},\dots,1/\sqrt{n})$. Clearly, it holds that $u^T u = 1$. On the other hand we have that:

$$u^T A u = \sum_{w\in V}\frac{1}{n} \deg(w) = 2|E(G)|/n = d$$

The number of closed 4-walks in $G$ is at least $d^4$. This is because this number is exactly the trace of $A^4$ and
$$\text{tr}(A^4)=\sum_{i=1}^n \lambda_i^4 \ge \lambda_1^4 = d^4$$

Let $T$ be the number of $4$-cycles. To finish the proof, we show that the number of closed walks is at most

\begin{align}\label{eq:2-walks}
    10n^2 + \frac{10 T}{{10\choose 2}}
\end{align}

This would imply that $T \geq d^4 - 10n^2 - \frac{10 T}{{10\choose 2}}$ which would imply that $T=\Omega(d^4 - n^2)$. To prove Equation~\ref{eq:2-walks}, it suffices to bound the number of $2$-paths. For this, let $S$ be the set of pairs of nodes that intersect at most $10$ times (that is, for each ${u,v}\in S$, we have $N(u)\cap N(v)\leq 10$). The number of $2$-paths can be bounded by 

\begin{align*}
    &\sum_{\{u,v\}\in S} N(u)\cap N(v) + \sum_{\{u,v\}\notin S} N(u)\cap N(v)\\
    \leq \hspace{0.1cm} & 10n^2 + \sum_{\{u,v\}\notin S} N(u)\cap N(v)\\
    \leq \hspace{0.1cm} & 10n^2 + \frac{10 T}{{10\choose 2}}
\end{align*}

where the last inequality holds because for any $\{u,v\}\notin S$, we can charge ${x\choose 2}$ $4$-cycles on only $x$ $2$-paths, where $x>10$. 

\end{proof}
\end{lemma}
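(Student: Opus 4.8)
The plan is to count the closed $4$-walks in $G$ in two ways — once spectrally, to get a lower bound that is large when the average degree is large, and once combinatorially, to get an upper bound that is small unless there are many $4$-cycles — and then compare. Write $A$ for the adjacency matrix, with real eigenvalues $\lambda_1 \ge \dots \ge \lambda_n$. The number of closed $4$-walks is $\operatorname{tr}(A^4) = \sum_i \lambda_i^4 \ge \lambda_1^4$, and evaluating the Rayleigh quotient at the normalized all-ones vector gives $\lambda_1 \ge \tfrac1n \mathbf 1^{T} A \mathbf 1 = \tfrac{2m}{n} = d$. So the first step records $\operatorname{tr}(A^4) \ge d^4$.

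For the matching upper bound I would classify each closed $4$-walk $v_0 v_1 v_2 v_3 v_0$ by whether its four vertices are distinct. Since $G$ is loopless, consecutive vertices differ, so a non-generic walk must satisfy $v_0 = v_2$ or $v_1 = v_3$; a generic walk traces a genuine $4$-cycle, and each $4$-cycle is traced by exactly $8$ walks ($4$ choices of starting point, $2$ orientations). Counting the non-generic walks by their repeated vertex $w$ and its two chosen neighbours gives $2\sum_w \deg(w)^2 - 2m$, which equals $4P_2 + 2m = \Theta(P_2 + m)$, where $P_2$ is the number of $2$-paths (using the identity $\sum_w \deg(w)^2 = 2P_2 + 2m$). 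Hence $\operatorname{tr}(A^4) = 8T + 4P_2 + 2m$, with $T$ the number of $4$-cycles.

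It remains to bound $P_2$ in terms of $T$, which is the crux. For an unordered pair of distinct vertices $\{u,v\}$ put $x_{uv} = |N(u) \cap N(v)|$, so $P_2 = \sum_{\{u,v\}} x_{uv}$, while selecting two common neighbours of a diagonal pair of a $4$-cycle — and noting each $4$-cycle has two diagonal pairs — gives $\sum_{\{u,v\}} \binom{x_{uv}}{2} = 2T$. I would split the sum for $P_2$ at a fixed constant, say $10$: pairs with $x_{uv} \le 10$ contribute at most $10\binom n2 = O(n^2)$ in total, and for pairs with $x_{uv} > 10$ one has $x_{uv} \le \binom{x_{uv}}{2}$ (valid since then $x_{uv} \ge 3$), so these contribute at most $2T$. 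Thus $P_2 = O(n^2 + T)$, and since $m \le \binom n2$ as well, combining with the spectral bound yields $d^4 \le \operatorname{tr}(A^4) = O(T + n^2)$, i.e.\ $T = \Omega(d^4 - n^2)$ (the statement being vacuous once $d^4 = O(n^2)$, so we may assume $d^4$ dominates and absorb constants). The step that will need the most care is exactly this last one: pinning down the multiplicities (one $4$-cycle versus $8$ walks; one $4$-cycle charged to its two diagonal pairs), applying $\binom x2 \ge x$ only in the regime $x \ge 3$, and checking that the non-generic walk count is genuinely $\Theta(P_2)$. As an aside, one could instead replace the spectral lower bound by two uses of convexity of $x \mapsto \binom x2$, applied to $\sum_w \binom{\deg w}{2} \ge n\binom d2$ and to $\sum_{\{u,v\}} \binom{x_{uv}}{2}$; this is essentially the route of \cite{BGSW19}, and the spectral version above is a touch cleaner.
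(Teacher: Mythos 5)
Your proposal is correct and follows essentially the same route as the paper: the spectral lower bound $\operatorname{tr}(A^4)\ge \lambda_1^4\ge d^4$ via the all-ones Rayleigh quotient, followed by bounding the number of $2$-paths by $O(n^2+T)$ through splitting pairs at a constant codegree threshold and charging high-codegree pairs to $4$-cycles. Your version is in fact somewhat more explicit than the paper's (you spell out the degenerate-walk count $4P_2+2m$ and the multiplicities $8$ and $2$), but the underlying argument is the same.
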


%In the second lemma we first partition the graph into high-degree and low-degree nodes, sets $L$ and $H$. Then we split the set of high degree nodes $H$ into two sets $A$ and $B$ such that the number of $2$-paths of the form $L\rightarrow A\rightarrow B$ is at least $P/(4\log^2n)$. In our partition of the nodes in $H$ we have that each node in $A$ has the same number of incoming edges from $L$, and the same number of outgoing edges to $B$.The partition also have the property that each node in $B$ has at least $1$ incoming edge from $A$. Those properties are useful to us since we need to use the fact the nodes in B are of higher degree than nodes in A in the proof of \ref{thm:conj}.

\begin{lemma}\label{lem:regularView}
Given a graph $G=(V,E)$ with $m$ edges, let $H$ be the set of nodes with degree larger than $m^{1/3}$, and $L=V\setminus H$. Orient all the edges $\{u,v\}$ from $u$ to $v$ if $\deg(u)\leq \deg(v)$ (break ties arbitrarily). Let $P$ be the number of directed $2$-paths of the form $L\rightarrow H\rightarrow H$.
There is a partition of the set of nodes $H$ into two parts $A$ and $B$, such that the number of directed $2$-paths of the form $L\rightarrow A\rightarrow B$ is at least $P/(4\log^2n)$, and each node in $A$ has the same number of incoming edges from $L$ (up to a multiplicative $2$-factor), and the same number of outgoing edges to $B$ (up to a multiplicative $2$-factor). Furthermore, each node in $B$ has at least $1$ incoming edge from $A$.

\begin{proof}
We start by describing a simpler partitioning. This simpler partitioning splits the set of high-degree nodes $H$ into two sets $A$ and $B$ such that the number of $2$-paths of the form $L\rightarrow A\rightarrow B$ is at least $P/4$. Such a partitioning exists by the probabilistic method: Each node in $H$ joins $A$ with probability $1/2$ independently. Thus, the probability that a $2$-path $u\rightarrow v\rightarrow w$ that is of the form $L\rightarrow H\rightarrow H$ survives in $L\rightarrow A\rightarrow B$ is $\Pr(u\in A \wedge v\in B) = 1/4$. Hence, in expectation, we get $P/4$ such paths in $L\rightarrow A\rightarrow B$. \\

Next, let $\deg_{L}(u)$ be the number of incoming edges from $L$ to a node $u$. Similarly, $\deg_{ B}(u)$ is the number of outgoing edges to $B$. It holds that the number of $2$-paths of the form $L\rightarrow A\rightarrow B$ is 

$$P/4=\sum_{a\in A}\deg_{L}(a)\cdot \deg_{B}(a) = \sum_{(i,j)\in [\log n]} \sum_{\substack{a\in A \text{ s.t. }\\ \deg_L(a)\approx 2^i \\ \deg_B(a)\approx 2^j}}\deg_{L}(a)\cdot \deg_{B}(a)$$

Hence, one of the $(i,j)$ buckets is contributing at least $P/(4\log^2n)$ $2$-paths, as desired. Furthermore, we can trivially delete nodes in $B$ that have no incoming edges from $A$ (we mean the updated version of $A$ which is the set of nodes that fall into our large $(i,j)$ bucket).

\end{proof}
\end{lemma}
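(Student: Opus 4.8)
The plan is to chain together three elementary moves: a single random split of $H$, a dyadic bucketing of the candidate middle nodes according to their number of $L$-in-edges and $B$-out-edges, and a one-line pruning of $B$. I will produce two disjoint subsets $A,B\subseteq H$ with all the stated properties (which is what the rest of the argument for Theorem~\ref{thm:conj} uses). First I would form a random partition $H=A_0\cup B_0$ by placing each node of $H$ into $A_0$ independently with probability $1/2$, and into $B_0$ otherwise. Fix a directed $2$-path $u\to v\to w$ of type $L\to H\to H$. Its endpoint $u\in L$ is untouched by the coins and the edge orientations are already fixed, so this path becomes one of type $L\to A_0\to B_0$ precisely when $v\in A_0$ and $w\in B_0$; since $v\neq w$ and the coins are independent across nodes, this has probability $1/4$. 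By linearity of expectation the expected number of $L\to A_0\to B_0$ paths is $P/4$, hence some outcome of the coins yields a partition $(A_0,B_0)$ with at least $P/4$ of them; fix such a partition.

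Next, for $a\in H$ write $\deg_L(a)$ for its number of in-edges from $L$ and $\deg_B(a)$ for its number of out-edges into $B_0$. Each $L\to A_0\to B_0$ path is determined by its unique middle node together with one in-edge from $L$ and one out-edge to $B_0$, so the number of these paths equals $\sum_{a\in A_0}\deg_L(a)\,\deg_B(a)\ge P/4$. I would delete from $A_0$ every node with $\deg_L(a)=0$ or $\deg_B(a)=0$ (they contribute nothing to the sum) and assign each remaining node the bucket $\big(\lfloor\log_2\deg_L(a)\rfloor,\lfloor\log_2\deg_B(a)\rfloor\big)$. Since every degree lies in $\{1,\dots,n\}$, there are at most $\log^2 n$ distinct buckets, so by averaging one bucket $(i,j)$ carries at least $P/(4\log^2 n)$ of the paths; let $A$ be the set of nodes in that bucket. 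By construction every $a\in A$ has $\deg_L(a)\in[2^i,2^{i+1})$ and $\deg_B(a)\in[2^j,2^{j+1})$, so the nodes of $A$ have pairwise equal $L$-degrees up to a multiplicative factor $2$, likewise equal $B$-degrees up to a factor $2$, and $\sum_{a\in A}\deg_L(a)\,\deg_B(a)\ge P/(4\log^2 n)$, i.e.\ there are at least that many $L\to A\to B_0$ paths.

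Finally I would set $B:=\{\,b\in B_0 : b \text{ has at least one in-edge from } A\,\}$. Any $b\in B_0\setminus B$ has no in-edge from $A$, hence no out-edge of a node $a\in A$ is incident to $b$; therefore $\deg_B(a)$ is unchanged for every $a\in A$ when $B_0$ is replaced by $B$. Consequently the number of $L\to A\to B$ paths is still $\sum_{a\in A}\deg_L(a)\,\deg_B(a)\ge P/(4\log^2 n)$, the factor-$2$ regularity of the $L$-degrees and of the $B$-degrees over $A$ is untouched, and now every node of $B$ has an in-edge from $A$. Thus $A$ and $B$ are disjoint subsets of $H$ satisfying every clause of the lemma.

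The proof is essentially bookkeeping, powered by linearity of expectation and a pigeonhole over the $\log^2 n$ dyadic buckets; the one step that deserves a second look is the last one. There the subtlety is that shrinking $B_0$ down to $B$ must not perturb the counts $\deg_B(a)$ of the surviving middle nodes $a\in A$ — and it does not, precisely because a removed node of $B_0$ has no $A$-neighbor, so neither the $P/(4\log^2 n)$ lower bound on paths nor the factor-$2$ concentration is disturbed by the trimming.
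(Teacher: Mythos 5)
Your proposal is correct and follows essentially the same route as the paper: a probability-$1/2$ random split of $H$ to get $P/4$ surviving $L\to A\to B$ paths, a pigeonhole over the $\log^2 n$ dyadic $(\deg_L,\deg_B)$ buckets to extract the regular set $A$, and a final pruning of $B$ to nodes with an $A$-in-neighbor. Your explicit check that the pruning leaves $\deg_B(a)$ unchanged for $a\in A$ (and your remark that the output is a pair of disjoint subsets rather than a literal partition of $H$, which is all the proof of Theorem~\ref{thm:conj} needs) is a slightly more careful rendering of the same argument.
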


\begin{proof}[Proof of Theorem~\ref{thm:conj}]
First, take the partitioning from Lemma~\ref{lem:regularView}. We know that the number of $2$-paths of the form $L\rightarrow A\rightarrow B$ is $P'\geq P/(4\log^2n) \geq 25m^{4/3}$. Recall that each node in $A$ has the same in-degree from $L$ up to a multiplicative 2-factor. Denote by $d_L$ the minimum over these degrees. Similarly, each node in $A$ has the same out-degree to $B$ (up to a multiplicative 2-factor). Denote the minimum of these degrees by $d_B$. Furthermore, each node in $B$ has at least one incoming neighbor from $A$. 

We show that the number of $4$-cycles is $\Omega(P')$. For this, let $d_0$ be the average degree in the graph induced by the nodes in $A\cup B$. Since we have at most $m^{2/3}$ nodes in $A\cup B$, by Lemma~\ref{lem:AvgDegreeCycles}, it suffices to show that $d_0^4 = \Omega(P')$.

For this, we split the proof into two cases:

\begin{enumerate}
    \item $|A|>|B|$: Observe that in this case, $d_0 > d_B/2$. Hence, it sufficed to show that $d_B^4 = \Omega(P')$. For this, recall that $P' \leq 4 |A|\cdot d_L\cdot d_B$ (where the $4$ factor is coming from the two $2$ factors for $d_L$ and $d_B$), and assume towards a contradiction that $(d_B)^4<P'$ which implies (by substituting $d_B$ with $P'/ (4|A|\cdot d_L)$) that $(P')^3<(4|A|\cdot d_L)^4$. But this is impossible  because it would imply that $P'<16m^{4/3}$ (as $|A|\cdot d_L\leq m$).
    
    \item $|A|\leq |B|$: In this case, we have that $d_0 > d_B\cdot|A|/(2|B|)$. Hence, it suffices to show that $(d_B|A|/2|B|)^4 > P'$. Assume towards a contradiction that $(d_B|A|/2|B|)^4 \le P'$. By substituting $d_B$ with $P'/(4d_L\cdot |A|)$, this implies that $(P')^3<(8d_L\cdot |B|)^4$. Now we want to argue that $d_L\cdot |B|$ is at most $m$ to get a contradiction to $P' \ge 25m^{4/3}$. For this, recall that each node in $B$ has at least one incoming edge from $A$, which implies that the degree of each node in $B$ is at least $d_L$. Hence, we have that $m\geq \sum_{u\in B} \deg(u)\geq \sum_{u\in B} d_L=|B|\cdot d_L$, as desired. 
\end{enumerate}
\end{proof}

\subsection{Listing $4$-cycles}\label{sec:alg}
In this section we prove the following theorem.

\begin{theorem}\label{thm:m-listing}
Listing all the $4$-cycles in an undirected graph $G=(V,E)$ can be done in ${O}(m^{4/3}\log^{2}n + t \log^{2}n)$ time, where $m$ is the number of edges and $t$ is the number of $4$-cycles.
\end{theorem}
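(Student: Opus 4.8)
The plan is to fix the degree threshold $m^{1/3}$, so that the high/low partition $V=H\cup L$ and the ``lower‑degree‑to‑higher‑degree'' orientation of the edges are exactly those of Theorem~\ref{thm:conj}, then to list three families of $2$-paths and recover every $4$-cycle from pairs of listed $2$-paths that share both endpoints. The families are: \textbf{(i)} every $2$-path whose \emph{center} lies in $L$; \textbf{(ii)} every $2$-path all three of whose vertices lie in $H$; \textbf{(iii)} every directed $2$-path of the form $L\to H\to H$. Family (i) is listed by going over each $v\in L$ and all $\binom{\deg v}{2}$ pairs of its neighbours; since $\sum_{v\in L}\binom{\deg v}{2}\le \tfrac{m^{1/3}}{2}\sum_v \deg v = m^{4/3}$, this costs $O(m^{4/3})$. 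Family (iii) is listed by going over each $v\in H$ and, for it, every pair consisting of an in‑neighbour $u\in L$ and an out‑neighbour $w\in H$; the number of steps is exactly the quantity $P$ of Theorem~\ref{thm:conj}, and that theorem gives $P=O(m^{4/3}\log^2 n+t\log^2 n)$ — indeed if $P>100m^{4/3}\log^2 n$ then there are at least $P/(100\log^2 n)$ four‑cycles, so $P\le 100\,t\log^2 n$, and otherwise $P\le 100m^{4/3}\log^2 n$.

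Next I would bound the size of family (ii) by a short charging argument (the content of the corresponding helper lemma). Work in the subgraph $G_H$ induced on $H$: it has $N:=|H|<2m^{2/3}$ vertices (from $N\cdot m^{1/3}\le\sum_{v\in H}\deg v\le 2m$) and at most $m$ edges, and its number of $4$-cycles $t_H$ is at most $t$. The size of family (ii) is $\sum_{\{u,w\}} c_{uw}$, where $c_{uw}$ is the number of common neighbours of $u$ and $w$ inside $H$. Pairs with $c_{uw}\le 1$ contribute at most $\binom N2<2m^{4/3}$. For the rest, $\binom{c_{uw}}{2}\ge c_{uw}^2/4$, and $\sum_{\{u,w\}}\binom{c_{uw}}{2}=2t_H$ (each $4$-cycle of $G_H$ counted once per diagonal), so by Cauchy--Schwarz $\sum_{c_{uw}\ge 2}c_{uw}\le\sqrt{\binom N2}\cdot\sqrt{8t_H}=O(m^{2/3}\sqrt{t_H})$, which is $O(m^{4/3}+t)$ after splitting on whether $t_H\le m^{4/3}$. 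Hence family (ii) has size, and is listed in time, $O(m^{4/3}+t)$.

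With all listed $2$-paths in hand, I would group them by their unordered endpoint pair (in time linear in their number), and for each endpoint pair $\{u,w\}$ with listed centers $v_1,\dots,v_k$, output the $4$-cycles $u\,v_i\,w\,v_j$ for all $i<j$. Every output is a genuine $4$-cycle on four distinct vertices, and a fixed $4$-cycle is produced at most once per diagonal, hence at most twice; a canonical relabelling stored in a hash set removes the duplicates in $O(t)$ time. Summing the three listing costs, the grouping cost, and the $O(t)$ output/dedup cost gives a total of $O(m^{4/3}\log^2 n+t\log^2 n)$ — \emph{provided} every $4$-cycle is indeed output, which is the crux.

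For completeness I would argue, for a $4$-cycle $C=a\,b\,c\,d$ (edges $ab,bc,cd,da$), that one of its two diagonals consists entirely of listed $2$-paths. If $C$ has two \emph{opposite} vertices both in $L$ — equivalently $C$ has at most one $H$-vertex, or exactly two which are opposite — then that diagonal, say $\{a,c\}$ with $b,d\in L$, has both $2$-paths $a\,b\,c$ and $a\,d\,c$ centered in $L$, hence both in family (i). If all four vertices are in $H$, every $2$-path of $C$ is an $H$-$H$-$H$ path in family (ii). If exactly three are in $H$, let $d\in L$, let $x,y\in H$ be its two $C$-neighbours and $z\in H$ the fourth vertex: the diagonal $\{x,y\}$ has $x\,d\,y$ (center $d\in L$, family (i)) and $x\,z\,y$ (an $H$-$H$-$H$ path, family (ii)). The only remaining case is two $H$-vertices \emph{adjacent} on $C$, say $a,b\in H$, $c,d\in L$: the diagonal $\{a,c\}$ has $a\,d\,c$ (center $d\in L$, family (i)) and $a\,b\,c$ (center $b\in H$, an $L$-$H$-$H$ path), and the diagonal $\{b,d\}$ has $b\,c\,d$ (center $c\in L$, family (i)) and $b\,a\,d$ (center $a\in H$, an $L$-$H$-$H$ path); here $a\,b\,c$ is in family (iii) exactly when the edge $\{a,b\}$ is oriented $b\to a$, and $b\,a\,d$ is in family (iii) exactly when it is oriented $a\to b$, and exactly one of these holds. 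The main obstacle is precisely this interplay: choosing the three families so that each stays within budget — for family (iii) this is what Theorem~\ref{thm:conj} buys, for family (ii) the Cauchy--Schwarz charging — while the degree‑respecting orientation of $H$-$H$ edges is the device that forces, for every $4$-cycle with two adjacent high‑degree vertices, one of its diagonals to be fully covered.
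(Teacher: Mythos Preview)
Your proof is correct and follows essentially the same approach as the paper: list the same three families of $2$-paths (center in $L$; all-$H$; directed $L\to H\to H$), bound each within $O(m^{4/3}\log^2 n + t\log^2 n)$ (using Theorem~\ref{thm:conj} for the third), and cover every $4$-cycle by the same four-case analysis on the high/low pattern of its vertices. The only cosmetic difference is that for the all-$H$ family the paper invokes Observation~\ref{thm:n-listing} (``number of $2$-paths is $O(n'^2+t')$'') on the induced subgraph $G_H$, whereas you spell this out with a Cauchy--Schwarz charging; both arguments yield the same $O(m^{4/3}+t)$ bound.
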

The proof of Theorem~\ref{thm:m-listing} is based on listing several types of $2$-paths efficiently. Each of the lemmas \ref{lem:all-h-2-paths}, \ref{lem:l-in-the-middle-2-paths} and \ref{lem:l-h-h-oriented-2-paths} shows that we can list a certain type of $2$-paths efficiently.
\begin{lemma}\label{lem:all-h-2-paths}
Given a graph $G=(V,E)$ with $m$ edges. Let $H$ be the set of nodes with degree larger than $m^{1/3}$ and $L=V\setminus H$. Listing all the $2$-paths with only nodes from $H$ can be done in $O(m^{4/3}+t)$ time.
\end{lemma}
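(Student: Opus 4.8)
The plan is to enumerate these $2$-paths with the most naive, center-based algorithm and then charge its running time to the number of $4$-cycles, reusing the kind of convexity/charging argument that already appears inside the proof of Lemma~\ref{lem:AvgDegreeCycles}. The only structural fact I will need about $H$ is the standard degree bound $|H|\le 2m^{2/3}$, which holds because $m^{1/3}\cdot|H|<\sum_{v\in H}\deg(v)\le 2m$.

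First, in $O(m+n)$ preprocessing time (and since isolated vertices may be discarded, $n\le 2m$), I would mark the set $H$ with a boolean array and, for every $v\in H$, build the sublist $N_H(v):=N(v)\cap H$ by filtering $v$'s adjacency list. The main loop then iterates over each $v\in H$ and, for every pair of distinct nodes $\{u,w\}\subseteq N_H(v)$, outputs the $2$-path through $u,v,w$. Since a $2$-path has a unique center and a unique unordered pair of endpoints, this emits every $H$-only $2$-path exactly once, so the loop runs in time $O\!\left(\sum_{v\in H}\binom{|N_H(v)|}{2}\right)=O(P)$, where $P$ denotes the number of $2$-paths with all nodes in $H$. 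Hence the whole algorithm runs in $O(m+P)$ time, and it remains to prove $P=O(m^{4/3}+t)$.

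To bound $P$, I would rewrite it by center-versus-endpoints duality as $P=\sum_{\{u,w\}\subseteq H,\,u\neq w}|N_H(u)\cap N_H(w)|$ and split the sum according to whether a pair $\{u,w\}$ has at most $c$ or more than $c$ common neighbors in $H$, for a suitable constant $c$. Pairs of the first kind contribute at most $c\binom{|H|}{2}=O(m^{4/3})$ by the degree bound. For a pair of the second kind, with $x>c$ common neighbors in $H$, picking any two of those common neighbors yields a $4$-cycle (with $\{u,w\}$ and the two chosen neighbors as the two pairs of opposite corners), and each $4$-cycle with all four vertices in $H$ is obtained in this way from exactly two pairs; hence $\sum_{\{u,w\}}\binom{|N_H(u)\cap N_H(w)|}{2}\le 2t$. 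Since $x>c$ implies $x\le\frac{2}{c}\binom{x}{2}$, the pairs of the second kind contribute at most $\frac{2}{c}\cdot 2t=O(t)$ to $P$. Altogether $P=O(m^{4/3}+t)$, and combined with the preprocessing and loop bounds this gives the claimed $O(m^{4/3}+t)$ running time.

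I do not expect a genuine obstacle: the estimate on $P$ is essentially the proof of Lemma~\ref{lem:AvgDegreeCycles} applied to the induced subgraph $G[H]$, whose vertex count is $O(m^{2/3})$, so that $\binom{|H|}{2}=O(m^{4/3})$. The one point that requires a little care is ensuring the enumeration is genuinely output-linear --- one must enumerate per center rather than per pair of endpoints, so as not to re-scan adjacency lists and pay $\sum_{v\in H}|N_H(v)|\cdot|H|$, which could be as large as $m^{5/3}$.
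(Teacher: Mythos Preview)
Your proposal is correct and is essentially the same approach as the paper's: the paper simply applies Observation~\ref{thm:n-listing} to the induced subgraph $G[H]$, using $|H|\le 2m^{2/3}$ to get the $O(m^{4/3}+t)$ bound, and the key fact behind that observation---that the number of $2$-paths is $O(n'^2+t')$---is exactly the charging argument you spell out. Your write-up is more explicit about the center-based enumeration and the convexity/charging step, but the underlying argument is identical.
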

\begin{proof}
Let $G'$ be the subgraph of $G$ induced by $H$. Denote by $n'$ the number of nodes in $G'$ and by $t'$ the number of $4$-cycles in $G'$. Observe that $n'\le 2m^{2/3}$ and $t'\le t$. By using an argument similar to the one used in Observation~\ref{thm:n-listing}, we can list all the $2$-paths in $G'$ in time $O(n'^2 + t') = O(m^\frac{4}{3} + t)$.
\end{proof}
\begin{lemma}\label{lem:l-in-the-middle-2-paths}
Given a graph $G=(V,E)$ with $m$ edges. Let $H$ be the set of nodes with degree larger than $m^{1/3}$ and $L=V\setminus H$. Listing all the $2$-paths with a node from $L$ at the center can be done in $O(m^{4/3})$ time.
\end{lemma}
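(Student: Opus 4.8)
The plan is to list these $2$-paths directly, iterating over the center node. For every $v \in L$, I would scan the adjacency list of $v$ and output, for each unordered pair $\{u,w\}$ of distinct neighbors of $v$, the $2$-path $u - v - w$. Every $2$-path with center in $L$ arises in this way for exactly one choice of center $v$, and no pair is ever listed twice, so correctness is immediate; the only thing that needs an argument is the running time.

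The key estimate is that, even though a single low-degree node $v$ can be the center of $\binom{\deg(v)}{2}$ many $2$-paths, the sum of these quantities over all of $L$ is already $O(m^{4/3})$. Concretely, using $\deg(v) \le m^{1/3}$ for every $v \in L$,
\[
\sum_{v \in L} \binom{\deg(v)}{2} \;\le\; \frac{1}{2}\sum_{v \in L} \deg(v)^2 \;\le\; \frac{m^{1/3}}{2}\sum_{v \in L}\deg(v) \;\le\; \frac{m^{1/3}}{2}\sum_{v \in V}\deg(v) \;=\; m^{4/3}.
\]
Hence at most $O(m^{4/3})$ $2$-paths are produced, each in constant time, and the adjacency-list scans contribute an additional $\sum_{v\in L}O(\deg(v)) = O(m)$, for a total of $O(m^{4/3})$. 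Note that, in contrast to Lemma~\ref{lem:all-h-2-paths}, no additive $+t$ term appears: the bound $O(m^{4/3})$ on the number of such $2$-paths holds regardless of the number of $4$-cycles.

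There is essentially no real obstacle here; the whole proof is the one-line degree-sum (convexity) bound above, which is exactly what the low-degree threshold $m^{1/3}$ was chosen to make work. The only point to be slightly careful about is to fold the cost of scanning adjacency lists of nodes that contribute very few pairs into the $O(m)$ term so that it does not dominate, but this is routine.
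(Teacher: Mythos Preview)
Your proof is correct and essentially the same as the paper's: the paper iterates over edges with an endpoint in $L$ and scans the neighbors of that endpoint, while you iterate over centers $v\in L$ and list pairs of neighbors, but both arguments reduce to the bound $\sum_{v\in L}\deg(v)^2 \le m^{1/3}\sum_{v\in L}\deg(v)\le 2m^{4/3}$.
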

\begin{proof}
Scan all the edges in $G$ and for those with at least one endpoint in $L$ scan all the neighbors of the endpoints in $L$. This procedure finds all the $2$-paths with a node from $L$ in the middle.
Each node in $L$ has at most $m^{1/3}$ neighbors. Therefore, the running time of this procedure is $O(m^\frac{4}{3})$.
\end{proof}
\begin{lemma}\label{lem:l-h-h-oriented-2-paths}
Given a graph $G=(V,E)$ with $m$ edges. Let $H$ be the set of nodes with degree larger than $m^{1/3}$ and $L=V\setminus H$. Orient all the edges $\{u,v\}$ from $u$ to $v$ if $\deg(u)\leq \deg(v)$ (break ties arbitrarily). Listing all the directed $2$-paths of the form $L\to H\to H$ can be done in ${O}(m^{4/3} \log^{2}n + t \log^{2}n)$ time.
\end{lemma}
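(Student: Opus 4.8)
\emph{Proof proposal.} The plan is to run the naive enumeration algorithm for these $2$-paths and to bound its running time by invoking Theorem~\ref{thm:conj}. First I would compute all degrees in $O(m)$ time, determine the sets $H$ and $L$, and orient every edge $\{u,v\}$ from $u$ to $v$ when $\deg(u)\le\deg(v)$. Then, for each node $v\in H$, I would scan the edges incident to $v$ and collect two lists: its in-neighbors that lie in $L$, and its out-neighbors that lie in $H$. Summed over all $v\in H$ this costs $O\bigl(\sum_{v\in H}\deg(v)\bigr)=O(m)$. Finally, for each $v\in H$ and each pair consisting of an $L$-in-neighbor $u$ and an $H$-out-neighbor $w$, I would output the directed $2$-path $u\to v\to w$.

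Every directed $2$-path of the form $L\to H\to H$ has a unique middle vertex $v\in H$, an in-neighbor $u\in L$, and an out-neighbor $w\in H$, so it is emitted exactly once (when $v$ is processed), and its two endpoints are automatically distinct since one lies in $L$ and the other in $H$; hence the algorithm is correct. Its running time is $O(m+P)$, where $P=\sum_{v\in H}\deg_L(v)\cdot\deg_H^{\mathrm{out}}(v)$ is precisely the number of $L\to H\to H$ $2$-paths. The key step is to argue that $P$ is small. By Theorem~\ref{thm:conj}, either $P\le 100\,m^{4/3}\log^2 n$, or the graph has at least $P/(100\log^2 n)$ $4$-cycles, i.e. $t\ge P/(100\log^2 n)$ and thus $P\le 100\,t\log^2 n$. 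In both cases $P=O(m^{4/3}\log^2 n+t\log^2 n)$, and since $m\le m^{4/3}$ the total running time $O(m+P)$ is $O(m^{4/3}\log^2 n+t\log^2 n)$, as required.

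I do not expect a genuine obstacle in this lemma: the enumeration procedure is immediate and its correctness is transparent. The only substantive ingredient is the conversion of an output-size bound into a running-time bound via the $L\to H\to H$ Theorem, and the difficulty there has already been absorbed into the proof of Theorem~\ref{thm:conj} in Section~\ref{sec:LHH}. One minor point to state cleanly is that the algorithm never needs to know $P$ (or $t$) in advance — it simply enumerates, and the bound on $P$ is used only in the analysis.
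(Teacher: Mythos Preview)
Your proposal is correct and follows essentially the same approach as the paper: enumerate the $L\to H\to H$ paths in $O(m+P)$ time and then bound $P$ via Theorem~\ref{thm:conj}. The only cosmetic difference is that the paper iterates from the $L$-endpoint outward while you iterate from the middle $H$-vertex, but the analysis and the use of Theorem~\ref{thm:conj} are identical.
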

\begin{proof}

We can list all the $L\rightarrow H\rightarrow H$ $2$-paths in time that is linear in their number and the number of edges. This can be done by going over all the nodes $u\in L$, and then going over the neighbors $v\in H$ of $u$, and then going over all neighbors of $v$ with higher degree than $v$. This can be done in time that is linear in the number of edges and the number of $L\rightarrow H\rightarrow H$ $2$-paths. This is because we can prepare a set of higher degree nodes in $H$ for all the nodes $u\in V$, via a simple $O(m)$-time preprocessing step, where we go over all the edges (with at least one endpoint in $H$), detect for each edge the higher degree endpoint, and store it.
%We can list all the directed $2$-paths of the form $L\rightarrow H\rightarrow H$ using the following algorithm:
%\begin{itemize}
 % \item Orient the edges from low-degree to high-degree in $O(m)$ time.
  %\item Iterate over all the pairs of nodes $u,v\in H$. (There are at most $m^{4/3}$ such pairs.) For each such pair, check if $(u,v)$ is a directed edge of the graph. If so, store $v$ in the list $L_u$ and store $u$ in the list $\Tilde{L}$.
  %\item Iterate over the nodes in $\Tilde{L}$. For each such node $u$, iterate over all the incoming edges of $u$. For each such incoming edge $(w,u)$, if $w\in L$ then output $w\rightarrow u\rightarrow v$ for each $v\in L_u$. Notice that we don't waste too much time on edges with both sides in $H$: there are at most $m^{4/3}$ such edges.
%\end{itemize}
Therefore, the running time of the algorithm is $O(m + P)$ where $P$ is the number of directed $L\rightarrow H\rightarrow H$ $2$-paths. Since by Theorem~\ref{thm:conj} we have that $P=O(m^{4/3}\log^2n + t\log^2n)$, the claim follows.
\end{proof}
%The technique in the proof of theorem \ref{thm:m-listing} is similar in some sense to the partition done by Yuster and Zwick \cite{yuster2004detecting} of $k$-cycles into classes according to the degrees of their nodes.

%Now we show an algorithm for listing all the 4-cycles in an undirected graph in $\Tilde{O}(m^\frac{4}{3} + t)$ which proves theorem \ref{thm:m-listing}. 
%In our proof we

\begin{proof}[Proof of Theorem~\ref{thm:m-listing}]

%We wish to show that any form of a 4-cycle (for example $(L,L,L,L)$ is the class of 4-cycles with nodes from $L$ and $(L,H,L,H)$ is the class of 4-cycles with two opposite nodes in $L$ and two opposite nodes in $H$) can be decomposted into two complement $2$-paths of specific types (for example $H-L-H$ is the class of all the $2$-paths with the nodes in $H$ in the edges and a node in $L$ in the middle) such that there is an efficient way to list all these $2$-path types. This methodology works for all the cases except of one: 4-cycles of the form $(L,L,H,H)$. We successfully deal with this case using theorem~\ref{thm:conj}.
We consider all the different types of $4$-cycles (in terms of low-high degree nodes) and show that we can least all of them in the desired running time.
\begin{description}
  \item[Type 1: $4$-cycles that use only high-degree nodes.] This class of $4$-cycles can be decomposed into two $2$-paths of all high-degree nodes. These $2$-paths can be listed in $O(m^{4/3}+t)$ time by Lemma~\ref{lem:all-h-2-paths}.
  \item[Type 2: $4$-cycles with two non-adjacent low-degree nodes.] This class of $4$-cycles can be decomposed into two $2$-paths with a low-degree node at the center. Theses $2$-paths can be listed in $O(m^{4/3})$ time by Lemma~\ref{lem:l-in-the-middle-2-paths}.
  \item[Type 3: $4$-cycles with three high-degree nodes and one low-degree node.] This class can be decomposed into a $2$-path of all high-degree nodes and a $2$-path with a low-degree node at the center. Using Lemma~\ref{lem:all-h-2-paths} and Lemma~\ref{lem:l-in-the-middle-2-paths} these $2$-paths can be listed in $O(m^{4/3}+t)$ time.
  \item[Type 4: $4$-cycles with two adjacent low-degree nodes and two adjacent high-degree nodes.] These $4$-cycles of the can be decomposed into a directed $L\rightarrow H\rightarrow H$ $2$-path and an $LLH$ $2$-path. Using Lemma~\ref{lem:l-h-h-oriented-2-paths} and Lemma~\ref{lem:l-in-the-middle-2-paths} we can list all these $2$-paths in ${O}(m^{4/3} \log^{2}n + t \cdot \log^{2}n)$ time.
\end{description}
To sum up, we showed how to list all the different types of $4$-cycles in ${O}(m^{4/3} \log^{2}n + t \cdot \log^{2}n)$ time, as desired.

\mycomment{
As we shall see, not all the types of $2$-paths are required in order to construct all the $4$-cycles.
We store in a matrix all the witnesses for each pair of nodes in the graph.
In our algorithm we refer list three types of $2$-paths. The first type is $2$-paths with a low degree node as a witness, the second type is $2$-paths of the form H-H-H (all high degrees), the third type is directed $2$-paths of the form L-H-H where we orient all the edges $\{u,v\}$ from $u$ to $v$ if $\deg(u)\leq \deg(v)$ (break ties arbitrarily).  
\begin{description}
  \item[Case 1 ($2$-paths with low-degree node in the middle - $H/L-L-H/L$)] we look on all the edges with low node in one of the endpoints and add $2$-paths for all the neighbors of that node. Meaning that if we have an edge $(u,v)$, $v$ is a low degree node and $w$ is a neighbor of $v$ then for the $(u,w)$ cell in the matrix we add $v$ as a witness. The time complexity of this action is $O(m\Delta)=O(m^\frac{4}{3})$.
  \item[Case 2 ($2$-paths with all high-degree nodes - $H-H-H$)] We know that the number of high degree nodes in the graph is at most $\frac{2m}{\Delta}$. Therefore, we can use the algorithm from theorem \ref{thm:n-listing} on the graph induced by the high degree nodes and gets all the $2$-paths of the form H-H-H from this algorithm. We can donate the graph induced by the high degree nodes as $G'$. Since the number of  high degree nodes in the graph is at most $\frac{2m}{\Delta}$ we get a running time of $(O(\frac{2m}{\Delta})^2 + {t}_{G'}) \leq O(m^\frac{4}{3} + t)$.
  \item[Case 3 (directed $L\rightarrow H\rightarrow H$)] We can iterate all the oriented $2$-paths of the form $L\rightarrow H\rightarrow H$ using the following algorithm:
  \begin{itemize}
      \item Orient the edges from low-degree to high-degree in $O(m)$ time.
      \item Iterate over all the pairs of nodes $u,v\in H$. (There are at most $m^{4/3}$ such pairs.) For each such pair, check if $(u,v)$ is an oriented edge of the graph. If so, store $v$ in the list $L_u$ and store $u$ in the list $\Tilde{L}$.
      \item Iterate over the nodes in $\Tilde{L}$. For each such node $u$, iterate over all the incoming edges of $u$. For each such incoming edge $(w,u)$, if $w\in L$ then output $w\rightarrow u\rightarrow v$ for each $v\in L_u$. Notice that we don't waste too much time on edges with both sides in $H$: there are at most $m^{4/3}$ such edges.
  \end{itemize}
  In total, the running time of case 3 is $O(m^{4/3}+P)$ when $P$ is the number of oriented $L\rightarrow H\rightarrow H$ $2$-paths. If $P \le 100m^{4/3}\log^2n$ then the running time is $\Tilde{O}(m^{4/3})$. Otherwise, using Theorem~\ref{thm:conj}, we get that $t\ge P/(100\log^2n)$ and therefore the running time is $\Tilde{O}(m^{4/3}+t)$.
\end{description}
}
\end{proof}
\begin{figure}[H]
\centering
\includegraphics[scale=0.2]{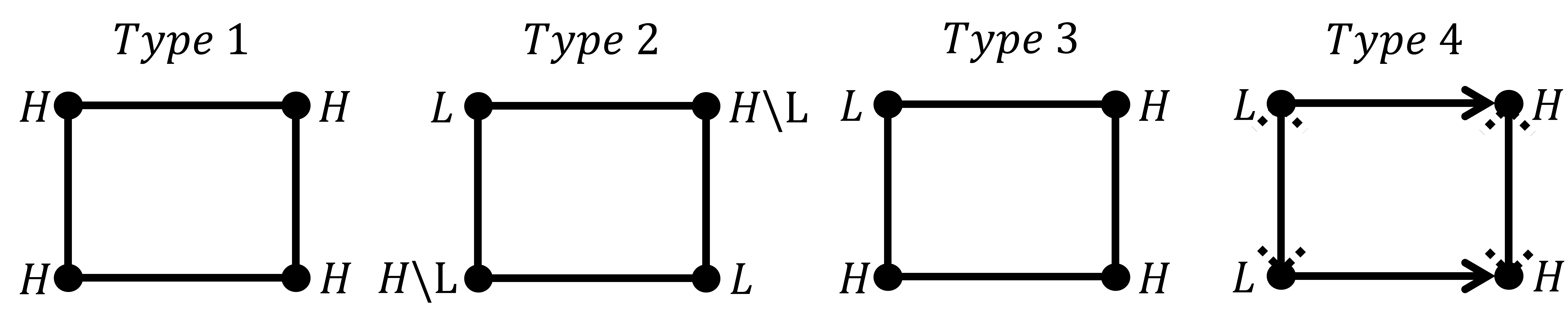}
\caption{The types of $4$-cycles. A $4$-cycle of the fourth type always consists an $L\rightarrow H\rightarrow H$ directed $2$-path and a $2$-path with a low-degree node at the center.}
\centering
\label{fig:4_cycles_types}
\end{figure}
%After we stored all the witness, we look at each pair with at least 2 witness and for each 2 witness we list a new cycle. This action takes a time complexity of $\Tilde{O}(t)$. we explain now how the algorithm list every possible 4-cycle. It is easy to see that there only 6 different forms of 4-cycles when we use a separation to high and low degree nodes. Meaning that every 4-cycle is either in one of those forms or isomorphic to one of those forms. We can see in the first part of figure \ref{fig:4_cycles_forms} those 6 forms. All cycles from the first form are listed using the $2$-paths from case 1. All cycles from the second form are listed using the $2$-paths from case 1 and case 2. All cycles from the third, fifth and sixth forms are listed using the $2$-paths from case 1. All cycles from the fourth form are listed using the $2$-paths from case 1 and case 3. We can see in the second part of figure \ref{fig:4_cycles_forms} an oriented 4-cycle from the fourth form. As we can see it doesn't matter what the orientation of the edge $(v,w)$ we have 2 witness for the pair $u,w$ or for the pair $x,v$. If the edge $(v,w)$ is oriented from $v$ to $w$ then for pair $u,w$ we have $x$ as a witness from case 1 and $v$ as a witness from case 3. If the edge $(v,w)$ is oriented from $w$ to $v$ then for pair $x,v$ we have $u$ as a witness from case 1 and $w$ as a witness from case 3. Therefore, no matter how the edge $(v,w)$ is oriented we can list the cycle $(u,v,x,w)$ using $2$-paths from case 1 and case 3.
\bibliographystyle{plain}
\bibliography{refs}

\begin{thebibliography}{10}

\bibitem{ABFarxiv}
Amir Abboud, Karl Bringmann, and Nick Fischer.
\newblock Stronger 3-sum lower bounds for approximate distance oracles via
  additive combinatorics, 2022.

\bibitem{ABKZ22}
Amir Abboud, Karl Bringmann, Seri Khoury, and Or~Zamir.
\newblock Hardness of approximation in p via short cycle removal: cycle
  detection, distance oracles, and beyond.
\newblock In Stefano Leonardi and Anupam Gupta, editors, {\em {STOC} '22: 54th
  Annual {ACM} {SIGACT} Symposium on Theory of Computing, Rome, Italy, June 20
  - 24, 2022}, pages 1487--1500. {ACM}, 2022.

\bibitem{AlmanW21}
Josh Alman and Virginia~Vassilevska Williams.
\newblock A refined laser method and faster matrix multiplication.
\newblock In {\em Proceedings of the 2021 ACM-SIAM Symposium on Discrete
  Algorithms (SODA)}, pages 522--539. SIAM, 2021.

\bibitem{alon1997finding}
Noga Alon, Raphael Yuster, and Uri Zwick.
\newblock Finding and counting given length cycles.
\newblock {\em Algorithmica}, 17(3):209--223, 1997.

\bibitem{bjorklund2014listing}
Andreas Bj{\"o}rklund, Rasmus Pagh, Virginia~Vassilevska Williams, and Uri
  Zwick.
\newblock Listing triangles.
\newblock In {\em International Colloquium on Automata, Languages, and
  Programming}, pages 223--234. Springer, 2014.

\bibitem{BGSW19}
Karl Bringmann, Fabrizio Grandoni, Barna Saha, and Virginia~Vassilevska
  Williams.
\newblock Truly subcubic algorithms for language edit distance and {RNA}
  folding via fast bounded-difference min-plus product.
\newblock {\em {SIAM} J. Comput.}, 48(2):481--512, 2019.

\bibitem{https://doi.org/10.48550/arxiv.2210.10173}
Ran Duan, Hongxun Wu, and Renfei Zhou.
\newblock Faster matrix multiplication via asymmetric hashing, 2022.

\bibitem{JXarxiv}
Ce~Jin and Yinzhan Xu.
\newblock Removing additive structure in 3sum-based reductions, 2022.

\bibitem{Kopelowitz2016HigherLB}
Tsvi Kopelowitz, Seth Pettie, and Ely Porat.
\newblock Higher lower bounds from the 3sum conjecture.
\newblock In {\em SODA}, 2016.

\bibitem{patrascu2010towards}
Mihai Patrascu.
\newblock Towards polynomial lower bounds for dynamic problems.
\newblock In {\em Proceedings of the forty-second ACM symposium on Theory of
  computing}, pages 603--610, 2010.

\bibitem{RICHARDS1985249}
Dana Richards and Arthur~L. Liestman.
\newblock Finding cycles of a given length.
\newblock In B.R. Alspach and C.D. Godsil, editors, {\em Annals of Discrete
  Mathematics (27): Cycles in Graphs}, volume 115 of {\em North-Holland
  Mathematics Studies}, pages 249--255. North-Holland, 1985.

\bibitem{Williams2020MonochromaticTT}
Virginia~Vassilevska Williams and Yinzhan Xu.
\newblock Monochromatic triangles, triangle listing and apsp.
\newblock {\em 2020 IEEE 61st Annual Symposium on Foundations of Computer
  Science (FOCS)}, pages 786--797, 2020.

\end{thebibliography}
\end{document}